\newtheorem{theorem}{\bf Theorem}
\newtheorem{proposition}{\bf Proposition}
\newtheorem{definition}{\bf Definition}
\begin{document}
%\pagenumbering{gobble}% Remove page numbers (and reset to 1)
\clearpage
%\maketitle
%Title.

\title{Social Community-Aware Content Placement in Wireless Device-to-Device Communication Networks}

\author{Mehdi~Naderi~Soorki, Walid~Saad, Mohammad~Hossein~Manshaei,  and~Hossein~Saidi
\IEEEcompsocitemizethanks{\IEEEcompsocthanksitem {M. Naderi Soorki, M.H. Manshaei and H. Saidi are with the Department of Electrical and Computer Engineering, Isfahan University of Technology, Isfahan 84156-83111, Iran. E-mails:  m.naderisoorki@ec.iut.ac.ir; \{manshaei, hsaidi\}@cc.iut.ac.ir}

\IEEEcompsocthanksitem {M. Naderi Soorki and W. Saad are with Wireless@VT, Bradley Department of Electrical and Computer Engineering, Virginia Tech, Blacksburg, VA, USA, ‎ E-mail: \{mehdin,walids\}@vt.edu}
}

\thanks{
This research was supported by the U.S. National Science Foundation under Grant CNS-1513697.
}

}
\IEEEtitleabstractindextext{%
\begin{abstract}
In this paper, a novel framework for optimizing the caching of popular user content at the level of wireless user equipments (UEs) is proposed. The goal is to improve content offloading over wireless device-to-device (D2D) communication links. In the considered network, users belong to different social communities while their UEs form a single multi-hop D2D network. The proposed framework allows to exploit the multi-community social context of users for improving  the local offloading of cached content in a multi-hop D2D network. To model the collaborative effect of a set of UEs on content offloading, a cooperative game between the UEs is formulated. For this game, it is shown that the Shapley value (SV) of each UE effectively captures the impact of this UE on the overall content offloading process. To capture the presence of multiple social communities that connect the UEs, a hypergraph model is proposed. Two line graphs, an influence-weighted graph, and a connectivity-weighted graph, are developed for analyzing the proposed hypergaph model. Using the developed line graphs along with the SV of the cooperative game, a precise offloading power metric is derived for each UE within a multi-community, multi-hop D2D network. Then, UEs with high offloading power are chosen as the optimal locations for caching the popular content. Simulation results show that, on the average, the proposed cache placement framework achieves $12\%$, $19\%$, and $21\%$ improvements in terms of the number of UEs that received offloaded popular content compared to the schemes based on betweenness, degree, and closeness centrality, respectively.
\vspace{-.3cm}
\end{abstract}
\begin{IEEEkeywords}
Cache placement, multi-hop D2D network, hypergraph model, Shapley value.
%\vspace{-1cm}
\end{IEEEkeywords}}
\maketitle
\section{Introduction}
\label{sec:Intro}
Due to the massive growth of smart devices and the increasing popularity of bandwidth-intensive applications, mobile traffic is expected to grow continuously at a rapid rate in the next few years~\cite{wang2015social}. Local area services and social network services are expected to constitute a major portion of this mobile traffic. In both local area and mobile social networking services, a large number of clients subscribe to a common content provider that frequently pushes multimedia content to the subscribers, e.g., text, photos, or videos. This can potentially generate thousands of duplicated downloads of the same content thus consuming a great amount of bandwidth in cellular systems~\cite{wang2015social} and~\cite{bastug2014living}. As a result of both local area and social network services, a large part of the cellular traffic consists of a few popular files that must be delivered to co-located social groups of user equipments (UEs).

Offloading the traffic of local area services by leveraging direct device-to-device (D2D) communication links between UEs, is a promising solution to reduce the congestion on existing cellular networks~\cite{zhao2015social,Mine1,Unmanned_Mozaffari}. In a D2D offloading procedure, the users can receive data from other UEs over D2D links instead of using the cellular links~\cite{Mine1,andreev2014cellular}. Since a large amount of traffic is generated by a few popular contents, caching popular files on the users' devices is one promising solution to offload the cellular traffic and reduce the load on the base stations and backhaul~\cite{chen2016cooperative}. Thus, caching popular files at the UE level and disseminating it via the use of D2D communication links is now seen as a key approach for boosting the performance of tomorrow's 5G networks \cite{hamidouche2016mean}. To properly decide on how and where to cache content, one must take into account, not only wireless physical parameters, such as channel gain or interference, but also new user-specific information, such as social metrics or geolocation, as discussed in~\cite{wang2015social,zhao2015social} and~\cite{hamidouche2016mean}. Indeed, both the channel quality over the D2D links and the social tie between UEs become important to decide on how to place content and share the cached data. On the one hand, the social tie determines the common interests of the users, thus determining the way in which they share content. On the other hand, the data rate over the D2D links will determine the effectiveness of the data sharing~\cite{wang2015social}. For instance, in local area and social network services, the content provider first sends the content to a target set of UEs (known as \emph{seeds}) via cellular network links. These seed UEs then cache the content and use D2D communication links to share it with other UEs in proximity and that belong to various social communities.

Many recent works have focused on developing new techniques to offload social network traffic by exploiting D2D communications among UEs such as~\cite{chen2016cooperative,semiari2015context,abdel2012energy}, and~\cite{li2014social}. Some of these works such as in~\cite{chen2016cooperative} have mainly focused on managing the interference among D2D links to increase the cooperative opportunity in sharing cached content at the UEs side. In contrast, in~\cite{abdel2012energy}, the authors focused on optimizing the overall system performance (e.g., minimize bandwidth) or the average delay of subscribers. However, the works in~\cite{chen2016cooperative} and~\cite{abdel2012energy} consider that the UEs always offload their cached content over the D2D links, and they do not consider the social tie between UEs. The works in~\cite{kempe2003maximizing} and~\cite{kempe2005influential} have focused on the influence maximization problem in social networks. The influence maximization problem is approximated in general models that are referred to as the decreasing cascade and linear threshold models. Then, a greedy algorithm is proposed to choose a set of individuals such that the initial activating of this set is as large as possible in influence expectation. In~\cite{alim2017leveraging}, a novel framework is proposed to enable devices to form multi-hop D2D connections in an effort to maintain sustainable communication in the presence of device mobility. The framework proposed in~\cite{alim2017leveraging} can be used to derive an optimal solution for time-sensitive content transmission while also minimizing the cost that the base station pays in order to incentivize users to participate in D2D. The work in~\cite{nguyen2014dynamic} focuses on how to efficiently identify communities in dynamic social networks. In this work, the authors present quick community adaptation, an adaptive modularity-based framework for not only discovering but also tracing the evolution of network communities in dynamic online social networks. {In~\cite{R2_11}, the problem of optimally determining source$\text{-}$destination connectivity of random networks with a finite number of nodes is studied. The authors in~\cite{R2_11} determine a policy for establishing whether a designated source and destination are connected with minimum expected cost. The proposed policy in~\cite{R2_11} simply condenses each known connected component to a single super node at each step, and in that condensation multi-graph it simply tests an edge that is both on the shortest path containing the super nodes, as well as on a minimum source$\text{-}$destination cut.}

In~\cite{semiari2015context,li2014social,Zhang2017}, and~\cite{Bai2016}, a social-aware D2D communication architecture is proposed to leverage social network features to optimize the use of D2D communications. For example, social ties can measure the strengths of users in D2D systems, and reflect to some degree the communication demands between UEs~\cite{semiari2015context}. Moreover, a high degree of centrality in a social network can imply that a given user may play a key role in data transmission. Indeed, users usually share popular cached content to each other in the D2D network only if they have a strong enough social tie~\cite{wang2015social} and they may participate in different social communities. As a result, UEs with high centrality should be allocated more wireless resources so as to leverage their connections in D2D transmission~\cite{li2014social}. In~\cite{Zhang2017}, a social-aware framework for optimizing D2D communications is presented by exploiting users' relationships in the social network, and connections of UEs in the physical wireless network. Then, to enhance cooperation of users in content delivery in D2D network, the authors have proposed to use different social networking features. In~\cite{Bai2016}, a novel hypergraph framework is proposed to for studying social-aware caching in D2D networks. In particular, the authors use different hypergraph concepts, such as hypergraph coloring and multidimensional matching, to optimize spectrum allocation and cachce placement in a D2D network. In~\cite{Zhuang2016}, the authors study the information diffusion in a clustered multilayer network model, where all constituent layers are random networks with high clustering. One of the key results of~\cite{Zhuang2016} is that information with low transmissibility spreads more effectively within a small but densely connected social network. In~\cite{fu2017anonymization}, the authors present a comprehensive study of the community-structured social network de-anonymization problem. The main focus of this work is on privacy and anonymization challenges. {In~\cite{R2_51}, the authors prove that most properties of nodes, links, and paths are correlated among the social and D2D graphs. Then, they use the structure of the social graph to build forwarding paths in the D2D graph, allowing two nodes to communicate over time using opportunistic contacts and intermediate nodes. In~\cite{R2_52}, the authors present a set of new temporal distance based metrics. Then, they show how these metrics can be applied effectively to characterise the temporal dynamics and data diffusion efficiency of social networks.} {In~\cite{R2_12}, the throughput capacity of wireless networks with social characteristics is studied. In particular, the proposed model in~\cite{R2_12} captures the impacts of the way people choose friends as well as the number of friends on the capacity of real large-scale networks specifically for the multicast traffic.} {In~\cite{R_minor2}, the authors propose a novel approach to detect properties of social grouping and human mobility. Then, popular social network users are used for one-hop opportunistic data forwarding. The work in~\cite{R_minor1} introduces new policies for dividing large communities into sub-communities following location or social interests. Then, users known as multi-homed users are exploited deliver data across the sub-communities.}

In practical social networks, users may belong to different social communities where each community's members have the same interests in receiving cached content. Thus, in a D2D network, the social tie among users of one community, centrality in each social community, and the effects of different communities on on another must be considered to improve data offload via caching and D2D links. None of these critical parameters are accounted for in the existing works such as~\cite{wang2015social,chen2016cooperative,semiari2015context,abdel2012energy,hamidouche2016mean,li2014social,kempe2003maximizing,kempe2005influential}, and~\cite{Zhang2017}. Moreover, even though the majority of existing literature such as ~\cite{wang2015social,chen2016cooperative,semiari2015context,abdel2012energy,hamidouche2016mean,kempe2003maximizing,kempe2005influential}, and~\cite{Zhang2017} focuses on single community, some works such as ~\cite{alim2017leveraging,nguyen2014dynamic,li2014social,Zhuang2016}~and~\cite{fu2017anonymization}, do consider multiple communities. However, these multi-community works do not capture the dependence and effect of the centrality of one social community on the other, which is particularly important for cache placement in real-world D2D networks.

The main contribution of this paper is a new framework to optimally select a suitable set of seed UEs that can be used for the cache placement over D2D network. The proposed framework leverages multi-community social network features to optimize multi-hop D2D offloading procedure. Indeed, the proposed framework allows to maximize the expected number of UEs that receive cached content from the cache placement set through multi-hop D2D offloading procedure. To quantify the collaborative effect of seed UEs (in the cache placement set) on local data offload, a cooperative game in characteristic function form is proposed. For this game, we prove that the Shapley value of each UE in the cache placement set captures the exclusive effect of this UE on the effectiveness of offloading popular content over D2D links. To capture the effects of multiple communities of users on each other in a D2D network, we model the social graphs among the users and the D2D graph among the UEs using a hypergraph. Then, we propose two line graphs: directed influence-weighted graph and directed connectivity-weighted graph for analyzing the hypergaph model. Using the combination of the Shapley value and the hypergraph model, we define a new offload power metric for the UEs. This metric quantifies the power of each UE in offloading a cached content over the multi-community multi-hop D2D network. Simulation results show that, on the average, the proposed framework achieves $12\%$, $19\%$ and $21\%$ improvements in terms of the number of UEs that received offloaded content compared to the schemes based on betweenness, degree, and closeness centrality, respectively.

The rest of this paper is organized as follows. Section~\ref{Dec:Sys-Model} presents the system model. In Section~\ref{Network Centrality}, the network centrality problem for optimal seed selection in social multi-hop D2D links is formulated. Then, in Section~\ref{Cooperative Game}, a cooperative game approach for solving the network centrality problem is proposed and the properties of its Shapley value are studied. Then, our framework for cache placement based on the hypergraph model and the Shapley value approach is presented. In Section~\ref{Dec:Complexity}, the complexity of our proposed approach is analysed. In Section~\ref{Dec:Simulation}, we provide the simulation results while conclusions are drawn in Section~\ref{Dec:Conclusion}.
\vspace{-0.4cm}
%*****************************************************************
\section{System Model}
\label{Dec:Sys-Model}
Consider a multi-hop D2D-enhanced cellular network in which a set $\mathcal{N}$ of $N$ wireless user equipments can communicate directly via D2D communication links. Each user can access popular content from a base station (BS) over a cellular link or from a seed over a multi-hop D2D link. In our model, the network operator always ensures that the content cached at the seed is fresh and corresponds to the most popular content. Thus, most of the time, a user can obtain a fresh popular content from a seed. In case the seed does not have the requested content, then, the user can download it directly from the BS. In this network, a given UE can deliver a file of size $B$  bits to a neighbor, i.e., in one hop, within one time slot $t$. The duration of each time slot is $T$ seconds. Our model focuses on delay-tolerant services that are not affected by the potential delay incurred by multi-hop transmissions.

We assume an overlay D2D communication model ~\cite{asadi2014survey}, in which a portion of the cellular resources is dedicated to D2D communications. Hence, no mutual interference occurs between D2D and cellular links. Consequently, the interference over any D2D link between two UEs $m$ and $n$ depends on other D2D pairs that communicate over the same resource block (RB) assigned to the D2D link between UEs $m$ and $n$. We consider an orthogonal frequency division multiple access scheme for the D2D transmissions. In this scheme, each D2D link will be assigned one RB. We assume that the transmission power of each UE $m$ is $p_m$ and the bandwidth of each resource block on the D2D link is equal to $B_w$. Consequently, the data rate between UE $m$ and UE $n$ is given by:
\begin{equation}
R_{mn}= B_w\log\left(1+\frac{\beta p_m h_{mn}}{B_wN_0+\sum_{k} h_{kn} p_k}\right),
\end{equation}
where $h_{mn}$ is the channel gain between UE $m$ and UE $n$ on each RB at time slot $t$, $N_0$ is the noise power spectral density, and $\beta=\frac{-1.5}{\ln{5P_e}}$ is the SNR gap for M-QAM modulation with $P_e$ being the maximum acceptable error probability. $\sum_{k} h_{kn} p_k$ is the interference from any other UE $k\neq m$ on the D2D link between UE $m$ and $n$ when the same RB is allocated to the UEs $k$ and $m$. We also assume a block fading channel for the D2D links whose fading process is assumed to be constant during one time slot (i.e., $T$ seconds). Consequently, we can consider a constant bit rate over each D2D link during one time slot.

Considering all D2D links among UEs in the communication network, we introduce a D2D graph $G^d(\mathcal{N}, \mathcal{E}_d)$ whose set of vertices is the set $\mathcal{N}$ of UEs and the set of edges (links) is  $\mathcal{E}_d=\{(m,n)| m,n\in\mathcal{N}\text{ and }\frac{B}{R_{mn}}\leq T\}$. Thus, a link exists from a given vertex $m$ to another vertex $n$ in the D2D graph if and only if UE $m$ can transmit a single $B$-bit packet to UE $n$ during one time slot $t$ of duration $T$ over a direct D2D link. Since UEs are carried by human users, we assume that all of the $N$ users form a multi-community social network. In this social network, there are $L$ social communities connecting the users who are carrying the UEs that form the D2D network. Let $\mathcal{L}_l$ be the set of UEs belonging to social community $l$, thus $\cup_{l=1}^L \mathcal{L}_l=\mathcal{N}$.

Each social community $l$ is modeled by a weighted social graph $G^s_l(\mathcal{L}_l,\mathcal{E}_l^s,w_l^s)$, whose vertices are the UEs belonging to social community $l$ and whose edges are given by the set $\mathcal{E}_l^s=\{(m,n)| 0 < w_{mn} , \forall m,n \in \mathcal{L}_l\}$, where $w_{mn}$ is the social tie between UEs $m$ and $n$ which is obtained using the function $w_l^s:\mathcal{E}_l^s\rightarrow (0,1]$. This function captures the strength of the social tie. A higher value of $w_{mn}$, $w_l^s:(m,n)\rightarrow w_{mn}$, represents a stronger social tie between UEs $m$ and $n$.

Social ties are used to capture social relationships between users such as: friendship, kinship, colleague relationships, and altruistic behavior that are observed in human activities~\cite{li2014social}. Due to the social ties among members of each community, the users in each community exhibit homophily as they share common contents~\cite{li2014social}. For example, students usually share content related to their major or fans of specific sport will tend to share news about it. Thus, we assume that all members of each social community are interested in a common popular content. Consequently, we consider $L$ popular files, each of which corresponds to one community. { Since number of UEs that can cache the popular file for each community can be seen as a budget, we consider the worst-case scenario where just one UE from each community is selected as a \emph{seed}}. We defined a seed set $\mathcal{S}_0$ consisting of $L$ seed UEs. The BS sends to each seed the popular file that corresponds to each community. Then, the seed caches the popular content. {The popular file per community needs to be received by all of the members of that community. For a given community and its associated popular file, other users belonging to other communities can help to forward the given popular content to the UEs of the given community by multi-hop D2D transmission.} Since each UE $m$ can send its cached content to another UE $n$ over multi-hop D2D links, we say that UE $n$ can be influenced by another UE $m$. Here, influence means receiving content over D2D links from other UEs that locally cached the data. Accordingly, we define the following \textit{one-hop influence} concept.
\begin{definition}\label{def-onehop}
\textnormal{The \textit{one-hop influence} of a given UE $m$ on its one-hop neighbor UE $n$ in the D2D graph is the preference of UE $m$ to transmit its cached content to UE $n$ over a direct one-hop D2D link .}
\end{definition}

Note that the defined \textit{one-hop influence} depends on the social tie between two UEs of each community and also the multi-hop D2D path between them. In other words, when the social tie between members increases then the probability of transmitting a cached content to the neighbors will also increase. Moreover,  when a node locally shares its cached content over D2D links, the content may go from one community's members to another over the D2D graph.

For example, assume that a given UE $m\in \mathcal{L}_u$ must send its cached content to another member of its community. It will then send this content to UE $n$ that belongs to another community $\mathcal{L}_v$, if the shortest path between UE $m$ and another member in community $\mathcal{L}_u$ includes the UE $n$ over the D2D graph.

Thus, if UE $m\in \mathcal{L}_u$ and UE $n$ are neighbors in the D2D graph $G^d$, the \emph{one-hop influence} of UE $m$ on the UE $n$ will be:
\begin{equation}
I_{mn}=\sum_{m'\in \mathcal{L}_u\backslash\{m\}, n\in\mathcal{P}^d_{mm'}}\frac{w_{mm'}}{|\mathcal{P}^d_{mm'}|},
\label{Dmodel}
\end{equation}
where $\mathcal{P}^d_{mm'}$ is the set of UEs which form a shortest path from UE $m$ to UE $m'$ within the D2D graph $G^d$.

Given the one-hop influence model in (\ref{Dmodel}), we consider the influence graph as a weighted directed graph $G^i(\mathcal{N},\mathcal{E}_d,w_i)$. In $G^i$, vertices are the set $\mathcal{N}$ of UEs and the weight of each edge $(m,n) \in \mathcal{E}_d$ captures the one-hop influence of UE $m$ on UE $n$, i.e., $I_{mn}$.

We illustrate these parameters using a simple example, shown in Fig.~\ref{DModel}. In this example, 10 UEs are partitioned into two social communities $G_1^s$ and $G_2^s$ and form one D2D graph $G^d$. For instance, since the channel gain between UE 2 and UE 3 is high enough, they are connected via a D2D link $(2,3)\in \mathcal{E}_d$. Due to the social tie between UE 3 and UE 5 in social community $2$, $(3,5)\in \mathcal{E}_2^s$ and the weight $w_{35}$ captures the social tie between UE $3$ and $5$. For the influence graph $G^i$ in this example, we must compute the one-hop influence between all D2D neighbors. For example, consider UEs 2 and 3 in the D2D graph. These UEs belong to different communities but they are neighbors. Using Definition~\ref{def-onehop}, we need to calculate $I_{23}$ and $I_{32}$ in order to obtain the influence graph $G^i$ as shown in the Fig.~\ref{DModel}. To compute $I_{23}$, we need to consider that UE 2 would like to share a content with UE 6 with which it has social tie $w_{2,6}$ in $G^s_1$. Given that UE 3 is in its shortest path toward UE 6 in the D2D graph, we can calculate $I_{23}$ as $\frac{w_{2,6}}{3}$. Similarly, to calculate $I_{32}$, we need to consider that UE 3 would like to share a content with UEs 1 and 5 with which it has social ties $w_{3,1}$ and $w_{3,5}$ in $G^s_2$. Given that UE 2 is not in the shortest path toward UE 5 and is only in the shortest path toward UE 1 in the D2D graph, we can calculate $I_{32}$ as $\frac{w_{3,1}}{2}$. Following the same procedure for all neighbors in the D2D graph, we can obtain the influence graph $G^i$ of the D2D graph. As shown in Fig.~\ref{DModel}, UEs 6 and 5 are selected as seeds and the BS sends popular content 1 to the UE 6 in social community 1 and popular content 2 to the UE 5 in social community 2. The list of notations used throughout this paper is presented in Table~\ref{tab:Symbols}.
\begin{figure}[t]
\centering
\includegraphics[scale=0.35]{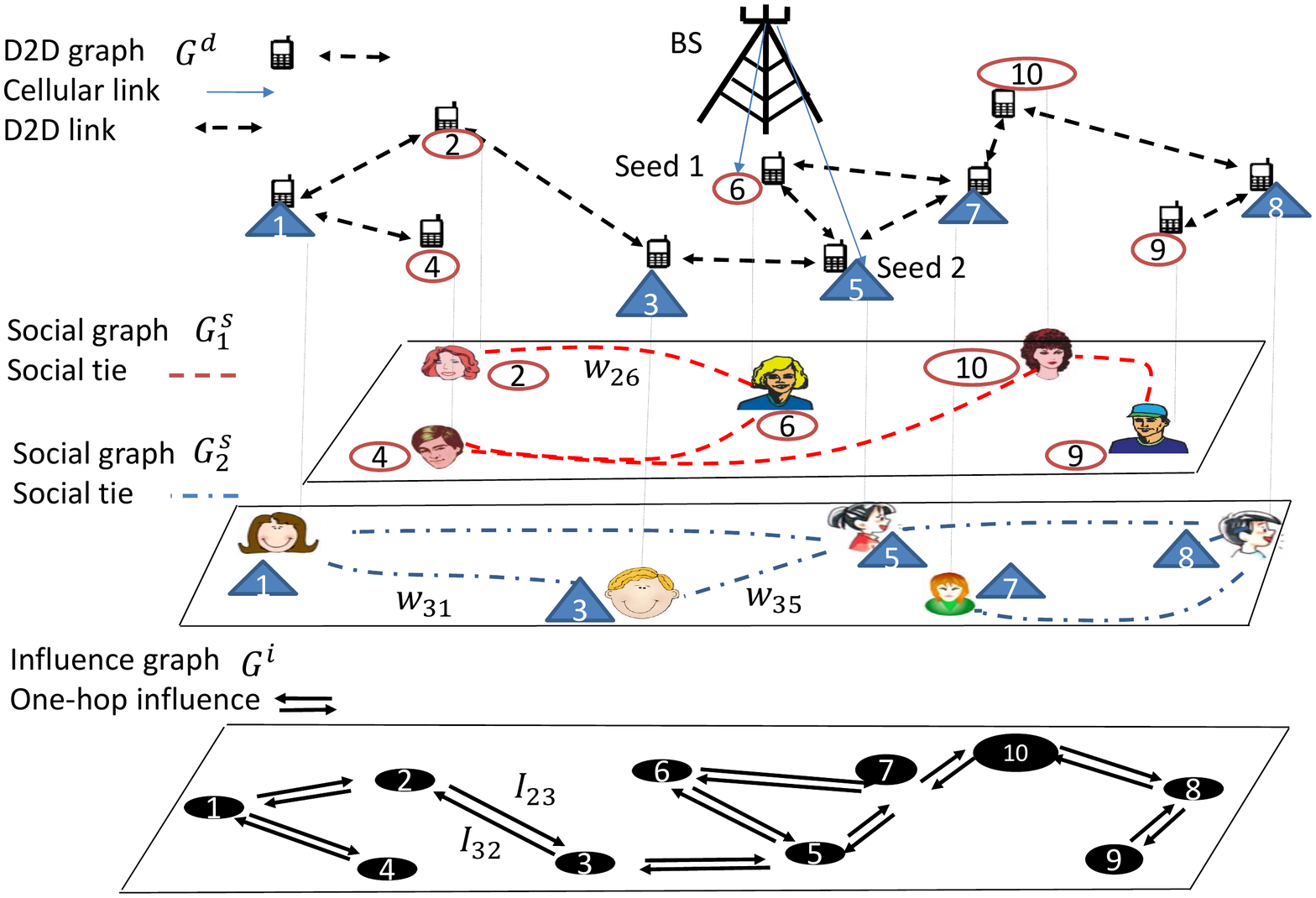}
\caption{An illustrative example of D2D graph and social network for 10 UEs.}
\label{DModel}
\vspace{-.3cm}
\end{figure}

\begin{table}[t]
\caption{List of notations used throughout the paper.}
\vspace{-.5cm}
\begin{center}
%\begin{adjustbox}{width=.95\columnwidth}
\begin{tabular}{l | l } \toprule
{ \textbf{Symbol}}  & { \textbf{Definition}}   \\  \rowcolor[gray]{.9} \hline
$\mathcal{N}$   &  Set of all UEs\\
$T$ &  Duration of one time slot $t$\\ \rowcolor[gray]{.9}
$R_{mn}$ & Bit rate between UE $m$ and UE $n$ on
each RB \\
$\mathcal{E}_d$ & Set of all D2D links\\ \rowcolor[gray]{.9}
$G^d$ & D2D graph\\
$\mathcal{L}_u$   & Set of UEs in social community $u$\\ \rowcolor[gray]{.9}
$\mathcal{E}_l^s$ & Set of all social relationships in community $l$\\
$w_{l}^s$   & The function: $\mathcal{E}_l^s\rightarrow[0,1]$. \\ \rowcolor[gray]{.9}
$w_{mn}$   & Social tie between UE $m$ and $n$ \\
$G_l^s$   & Weighted social graph of community $l$\\ \rowcolor[gray]{.9}
$\mathcal{P}_{mm'}^d$ &  Shortest path set of UEs from UE $m$ to $m'$ \\\rowcolor[gray]{.9}
& on D2D graph\\
$\mathcal{C}_m$   & Set of one-hop neighbor of UE $m$ in the D2D graph\\ \rowcolor[gray]{.9}
$\mathcal{C}_{m,d}$   & Set of $d$-distance neighbor of UE $m$ in the D2D graph\\
$d_{mn}$ & Length of the shortest path between the UE $m$ and $n$ \\
& in the influence graph \\ \rowcolor[gray]{.9}
$G^i$  & Weighted influence graph\\
$I_{mn}$   &One-hop influence of UE $m$ on the UE $n$\\  \rowcolor[gray]{.9}
$\mathcal{S}_0$   & Seed set\\
$\mathcal{S}_t$ & Set of UEs that received the social content by time slot $t$ \\ \rowcolor[gray]{.9}
$\mathfrak{S}$  & Diffusion process\\
$I_{d_n}(\mathcal{S}_t)$   &$d_n$-\textit{influence} of $\mathcal{S}_t$ on UE $n\in \mathcal{N}\backslash \mathcal{S}_t$\\  \rowcolor[gray]{.9}
$H$   &Hypergraph\\
$I_{d_n}(\mathcal{S}_t)$   &Exclusive influence of UE $k$ on $\mathcal{C}_{n,d}$ of UE $n$\\  \rowcolor[gray]{.9}
$v(\mathcal{S}_0,G^d)$ &Value of coalition $\mathcal{S}_0$ in graph $G^d$\\
$\phi_k(\mathcal{S}_0,G^d)$   & Shapley value of player $k$ in coalition $\mathcal{S}_0$\\  \rowcolor[gray]{.9}
$\phi_k(G^d)$  & Shapley value of UE $k$ in graph $G_d$\\
$O_k$   & Offloading power of UE $k$\\  \rowcolor[gray]{.9}
$D_i(H)$   &Directed influence-weighted line graph of hypergraph $H$\\
$D_c(H)$   &Directed connectivity-weighted line graph of  $H$\\  \rowcolor[gray]{.9}
$\mathfrak{S}$  & Set of all social communities\\
$\mathcal{E}_H$   &Edge set of line graph $D_i(H)$ or $D_c(H)$\\  \rowcolor[gray]{.9}
$w_i$   &Directed connectivity-weighted line graph of $H$\\
$w_c$  & Set of all social communities\\
\hline
\end{tabular}
%\end{adjustbox}
\end{center}
\label{tab:Symbols}
\vspace{-9mm}
\end{table}

In general, the D2D links between UEs as well as their social tie will affect the total number of UEs that cache and offload the popular content over D2D links. Thus, we need to exploit the social tie among users to select an appropriate seed set in the D2D graph. The optimal seed set can be selected to maximize the total expected number of UEs that will ultimately receive the popular content over a multi-hop D2D network. Finding the optimal seeds in a directed weighted graph is known as a top-$k$ node problem or influence maximization problem~\cite{gomez2003centrality}.

Our goal is to select a seed set of $k$ users to maximize the expected number of UEs that will receive the cached data in the D2D graph. The network can then send popular content directly to the seed set over cellular links from the BS. Then, UEs in the seed set cache the popular content and other UEs can download this content from the the seed set over D2D links~\cite{zhang2014recent}. This optimal set is called the most influential set of nodes in a network. Such a top-$k$ node or influence maximization problem is known to be NP hard~\cite{zhang2014recent,dhamal2014cooperative}. Next, we exploit the social ties among members of one community and the effect of members of communities on each others over the multi-community social aware multi-hop D2D graph to derive the optimal seed set of UEs.
\vspace{-0.4cm}
%*****************************************************************
\section{Network Centrality for Seed Selection}
\label{Network Centrality}
Our main goal is to find the center of the D2D graph $G^d$. This will allow finding the optimal cache placement, in order to maximize the expected number of UEs that receive cached content from the cache placement set using multi-hop D2D sharing. Then, the UEs can receive data from the seed set members and distribute it over the D2D graph according to their social tie. Let $\mathcal{S}_t$ be the set of UEs that have received popular content until time slot $t$. Then, $\mathfrak{S}=\{\mathcal{S}_0,\mathcal{S}_1,...,\mathcal{S}_t\}$ is defined as a \emph{diffusion process} in which $\mathcal{S}_t$ is the set of UEs that have received content and cached it by the end of time slot $t$. The influence maximization problem for offloading social data in multi-hop D2D networks can be defined as follows:
\begin{definition}\label{def:influence problem}
\textnormal{The \textit{influence maximization problem} in $L$-community multi-hop D2D networks aims to select a seed set $\mathcal{S}_0$ consisting of $L$ UEs, to maximize the number of UEs that received cached content over D2D links.
%
%where .  including $L$ UEs to send them popular content over the cellular links such that the expected number of UEs that received cached content on D2D links from seed set can be maximized.
}
\end{definition}

Let $\mathcal{C}_n$ be the set of one-hop neighbors of UE $n$ within graph $G^d$. We define the distance $d_{mn}$ between two UEs $m$ and $n$ as the summation of the links' weights in the shortest path between the UEs in the weighted influence graph $G^i$.

Then, we define $\mathcal{C}_{n,d}=\{m|m\in G^d, d_{mn}\leq d\}$ to be the set of $d$-distance neighbors of UE $n$. This set includes the UEs whose distance from UE $n$ is less than $d$. A lower distance between UEs leads to faster offloading of the cached content through D2D sharing. Next, we define the following concept.
\begin{definition}
\textnormal{The $d$-\textit{influence} of $\mathcal{S}_t$ on UE $n\in \mathcal{N}\backslash \mathcal{S}_t$ is defined as the expected number of UEs in the set $\mathcal{C}_{n,d}$ that can receive the cached data over the D2D graph from the UEs in $\mathcal{S}_t$. This is given by:}
\end{definition}
\begin{equation}
I_{d}(n,\mathcal{S}_t)=\sum_{j \in \mathcal{C}_{n,d}}\biggl(1-\prod_{\substack{m\in \mathcal{C}_j \cap \mathcal{S}_t}} (1-I_{mn})\biggr),
\label{d_influence}
\end{equation}
where $I_{mn}$ is given by (\ref{Dmodel}).

Whenever a UE $k\in \mathcal{S}_t$ transmits its cached data to one of the UEs in the $d$-distance neighbor set of UE $n$, we say that UE $k$ affects UE $n$. If all UEs in $\mathcal{S}_t$, except UE $k$ fail to affect UE $n$, then we can say that UE $k$ exclusively affects UE $n$. Now, we can calculate the exclusive influence of each UE $k$.
\begin{proposition}
\textnormal{The \textit{exclusive influence} of each UE $k\in \mathcal{S}_t$ on the $d$-distance neighbor set of a UE $n$ is given by:}
\begin{equation}
I_{d}(n,k)=I_{d}(n,\mathcal{S}_t)-I_{d}(n,\mathcal{S}_t \backslash \{k\})
\label{per_influence}
\end{equation}
\end{proposition}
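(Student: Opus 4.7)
My strategy is to prove the identity purely by algebraic manipulation of equation (\ref{d_influence}), and then to verify \emph{a posteriori} that the resulting expression coincides with the informal probabilistic notion of ``exclusive influence'' stated just before the proposition. First I would write out both $I_{d}(n,\mathcal{S}_t)$ and $I_{d}(n,\mathcal{S}_t\setminus\{k\})$ as sums over $j\in\mathcal{C}_{n,d}$ using (\ref{d_influence}), and split each outer sum into two disjoint groups of terms according to whether or not $k\in\mathcal{C}_j$. For terms with $k\notin\mathcal{C}_j$, the two products over $\mathcal{C}_j\cap\mathcal{S}_t$ and $\mathcal{C}_j\cap(\mathcal{S}_t\setminus\{k\})$ are identical, so those contributions cancel in the difference.

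Second, for each $j$ with $k\in\mathcal{C}_j$, I would factor out the $(1-I_{kn})$ term from the larger product and write
\begin{equation*}
\prod_{m\in\mathcal{C}_j\cap\mathcal{S}_t}(1-I_{mn}) \;=\; (1-I_{kn})\!\!\prod_{m\in(\mathcal{C}_j\cap\mathcal{S}_t)\setminus\{k\}}\!\!(1-I_{mn}).
\end{equation*}
Subtracting the two expressions in (\ref{d_influence}) and using the elementary identity $1-(1-I_{kn})=I_{kn}$ then collapses the difference to
\begin{equation*}
I_{d}(n,\mathcal{S}_t)-I_{d}(n,\mathcal{S}_t\setminus\{k\}) \;=\; \sum_{\substack{j\in\mathcal{C}_{n,d}\\ k\in\mathcal{C}_j}} I_{kn}\!\!\prod_{m\in(\mathcal{C}_j\cap\mathcal{S}_t)\setminus\{k\}}\!\!(1-I_{mn}).
\end{equation*}

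Third, I would interpret this closed form probabilistically under the independence assumption that is already implicit in the product form of (\ref{d_influence}): each $I_{mn}$ is the marginal success probability of UE $m$ transmitting its cached content, and successes of distinct UEs are treated as independent Bernoulli trials. Under this reading, the summand above is exactly the probability that UE $k$ succeeds in affecting $j$ while every other seed in $\mathcal{C}_j\cap\mathcal{S}_t$ fails to do so. Summing over $j\in\mathcal{C}_{n,d}$ yields, by linearity of expectation, the expected number of $d$-distance neighbors of $n$ that are reached through $k$ alone, which is precisely the ``exclusive influence'' $I_{d}(n,k)$ described before the proposition.

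The argument is essentially definitional once the independence convention is granted, so I do not expect a genuine obstacle in the algebra. The only delicate point is to make that convention explicit at the start of the proof; without it, the additive decomposition of the expectation into per-$j$ contributions, and hence the interpretation of the product as a probability of ``no other seed succeeds,'' would not be justified. I would therefore state the independence of the transmission success indicators as a standing hypothesis consistent with (\ref{d_influence}), and the proposition then follows directly from the two-line algebraic identity above.
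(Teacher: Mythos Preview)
Your proposal is correct and follows essentially the same approach as the paper: expand both terms via (\ref{d_influence}), simplify the difference term-by-term over $j\in\mathcal{C}_{n,d}$, and then interpret the resulting product probabilistically as the event that $k$ succeeds while every other seed in $\mathcal{C}_j\cap\mathcal{S}_t$ fails. The only cosmetic difference is that you split explicitly on whether $k\in\mathcal{C}_j$, whereas the paper writes the factor as $\bigl(1-\prod_{m\in\mathcal{C}_j\cap\{k\}}(1-I_{mn})\bigr)$ and lets the empty-product convention handle the $k\notin\mathcal{C}_j$ case automatically.
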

\begin{proof}
Given (\ref{d_influence}), we can write:
\begin{multline*}
I_{d}(n,\mathcal{S}_t)-I_{d}(n,\mathcal{S}_t \backslash \{k\})=\\
\sum_{j \in \mathcal{C}_{n,d}}\biggl(1-\prod_{\substack{m\in \mathcal{C}_j \cap \mathcal{S}_t}} (1-I_{mn})\biggr)-\\
\sum_{j \in \mathcal{C}_{n,d}} \biggl( 1-\prod_{\substack{m\in \mathcal{C}_j \cap \{ \mathcal{S}_t\backslash \{k\}\} }} (1-I_{mn})\biggr)=\\
\sum_{j \in \mathcal{C}_{n,d}} \biggl( \prod_{\substack{m\in \mathcal{C}_j \cap \{\mathcal{S}_t \backslash \{k\}\} }} (1-I_{mn})
-\prod_{\substack{m\in \mathcal{C}_j \cap \mathcal{S}_t}} (1-I_{mn})\biggr)=\\
\sum_{j \in \mathcal{C}_{n,d}}\biggl((1-\prod_{\substack{m\in \mathcal{C}_j\cap\{k\}}} (1-I_{mn}))
\prod_{\substack{m\in \mathcal{C}_j \cap \{\mathcal{S}_t \backslash \{k\}\} }} (1-I_{mn})\biggr),
\end{multline*}
where $(1-\prod_{\substack{m\in \mathcal{C}_j\cap\{k\}}} (1-I_{mn}))$ represents the probability that UE $k$ shares the cached data with at least one of the one-hop neighbors of UE $j$, where $ j \in \mathcal{C}_{n,d}$. Moreover, $\prod_{\substack{m\in \mathcal{C}_j \cap \{\mathcal{S}_t \backslash \{k\}\}}} (1-I_{mn})$ is the probability that none of the members of $S_t \backslash \{k\}$  can share the content with one of the one-hop neighbors of UE $j$, where $ j \in \mathcal{C}_{n,d}$. Thus, the multiplication of these two terms represents the probability that UE $k$ exclusively offloads its cached content to one of the members of   $\mathcal{C}_{n,d}$. Considering the summation over all the members of $\mathcal{C}_{n,d}$, the above expression represents the  expected number of members in $\mathcal{C}_{n,d}$ to which UE $k$ exclusively offloads its cached data.
\end{proof}
\vspace{-0.2cm}
According to Definition~\ref{def:influence problem}, the influence optimization problem that maximizes the expected number of UEs that received data over D2D links and cached it, is given by:
\begin{align}\label{opt_problem}
&\max_{\mathcal{S}_0} \lim_{t\rightarrow \infty} \mathds{E} \big( |\mathcal{S}_t| \big).
\end{align}
\vspace{-0.1cm}
The effect of the initial seed set $\mathcal{S}_0$ on another future set $\mathcal{S}_t$ in the diffusion process $\mathfrak{S}$ depends on the $d_n$-influence of $\mathcal{S}_t$ in each time slot. Generally, an influence maximization problem such as (\ref{opt_problem}) is known to be NP-hard~\cite{zhang2014recent} and~\cite{kempe2003maximizing}.

There are some conventional sub-optimal solutions for solving (\ref{opt_problem}) such as degree centrality, betweenness centrality, and closeness centrality~\cite{koschutzki2005centrality}. However, they are not sufficient to properly understand the relative importance of seeds as stand-alone entities in the D2D graph. The reason is that these existing methods select the seed set without considering the cumulative effect that the selected seeds have on each other during the diffusion of social content over the D2D graph~\cite{aadithya2010efficient} and~\cite{liu2010modeling}. However, in a multi-community social network that uses a multi-hop D2D network, the effect of each seed in offloading social data depends on the contribution of other seeds. This is due to the physical connection between the UEs with in the D2D network and the common social interests of the users in a given community. Consequently, we need to take into account the contributions of all possible combinations of UEs from different communities in offloading content through D2D links.

Game-theoretic network centrality has recently attracted attention as a promising solution to address the above limitation such as in~\cite{gomez2003centrality}. In particular, it has been shown that the cooperative game concept of a Shapley value (SV) is an effective measure of the importance of players within a group~\cite{gomez2003centrality} and~\cite{narayanam2011shapley}. In fact, the SV of each node in a graph can be considered as its influence when combined with other nodes if the value function of the cooperative game is appropriately defined as the influence of the players on other nodes over a graph. As a result, the Shapley value of each UE in the a given game can be interpreted as a centrality measure. The use of SV-based network centrality confers a high degree of flexibility (which is completely lacking in traditional centrality metrics) to capture the social tie and wireless channel gain between UEs. Moreover, this new paradigm has already been proved to be more useful than traditional centrality measures for certain real-life network applications such as in~\cite{narayanam2011shapley}. Thus, next, we will define a game and compute the Shapley value to sub-optimally solve the centrality problem (\ref{opt_problem}).
\vspace{-0.4cm}
%*****************************************************************
\section{{Seed Selection based on a Cooperative Game}}
\label{Cooperative Game}

Given a graph $G^d$, we use $g$ to define a coalitional game $g(\mathcal{N},v)$ whose set of players is the set of UEs $\mathcal{N}$ (in the D2D graph). Here $v$ is the game's \textit{characteristic function}~\cite{young2014handbook}. A coalition of UEs $\mathcal{S}$ is simply any subset of $\mathcal{N}$. The value of a given coalition $\mathcal{S}$, which is a function over the real line, depends on the D2D graph, $v(\mathcal{S},G^d)\rightarrow \mathbb{R}$. Thus, considering a coalition $\mathcal{S}$ of UEs, the definition of the characteristic function must quantify the effect of this coalition on offloading content over the D2D graph $G^d$~\cite{aadithya2010efficient}. Given the diffusion model in~(\ref{Dmodel}), the influence of each UE in coalition $\mathcal{S}$ on its one-hop neighbors, the number of one-hop neighbors in the current time slot, and the possible future multi-hop neighbors during next time slots affect the diffusion process $\mathfrak{S}$. Thus, we define a value function for the game that reflects these parameters in its characteristic function formulation.
\vspace{-0.35cm}
\subsection{Value Function}
Given a D2D graph $G^d$, we define the value of the game as the summation of the $d_n$-influence of its members on the UE in the D2D graph, as follows:
\begin{equation}
v(\mathcal{S}_0,G^d)=
  \begin{cases}
      \sum\limits_{n\in N\backslash \mathcal{S}_0}\alpha_n I_{d_n}(\mathcal{S}_0)& \quad \text{if $\mathcal{S}_0\neq\O$}\\
    0 & \quad \text{else}.\\
  \end{cases}
\label{value function}
\end{equation}
%\begin{equation*}
%v(\mathcal{S}_0,G^d)=
%\end{equation*}
%\begin{multline}
%\begin{cases}
%      \sum\limits_{n\in N\backslash \mathcal{S}_0}\alpha_n I_{d}(n,\mathcal{S}_0)& \quad \text{if $\mathcal{S}_0\neq\O$}\\
%    & \quad \text{otherwise}.
%  \end{cases}
%\label{value function}
%\end{multline}

A coalition that achieves a higher value function in (\ref{value function}) will have a higher probability of sending the cached data to the other UEs over the D2D graph. Here, $\alpha_n$ is a price parameter per unit influence. Thus, the value function in (\ref{value function}) will be a monetary value and of \textit{transferable utility}~\cite{young2014handbook}. From (\ref{d_influence}), we can see that the value of each coalition is related to the social tie between its members and their one-hop neighbors over the D2D graph.
\vspace{-0.35cm}
\subsection{Seed Selection using the Shapley Value}
The Shapley value , $\phi_k(\mathcal{S}_0,G^d)$, of a player $k$ in a coalition $\mathcal{S}_0$ is given by $\phi_k(\mathcal{S}_0,G^d)=\sum\limits_{\mathcal{R}\subseteq\mathcal{S}_0\backslash\{k\}}\frac{(|\mathcal{S}_0|-|\mathcal{R}|-1)!|\mathcal{R}|!}{|\mathcal{S}_0|!}\big(v(\mathcal{R}\cup\{k\},G^d)-v(\mathcal{R},G^d)\big)
$~\cite{young2014handbook}. Consequently, if a UE $k$ achieves a high SV, this UE will contribute more to the value function of any randomly chosen coalition of UEs $\mathcal{R}$ compared to UEs in $\mathcal{S}_0$. Following (\ref{value function}), the defined value function captures the influence of coalition $\mathcal{S}_0$ in the distribution of the cached data over $G^d$. Thus, a higher SV for a UE $k\in \mathcal{S}_0$ implies a higher degree of collaboration between this UE and other UEs in $\mathcal{S}_0$ for the purpose of transmitting cached content over D2D links. Next, we show how the Shapley value of each UE $k$ is related to its exclusive influence on the UEs which are not in its coalition.
\begin{theorem}
\textnormal{The Shapley value of each UE $k$ in coalition $\mathcal{S}_0$, is equal to the exclusive influence of UE $k$ on the UEs which are not in $\mathcal{S}_0$, which is given by:
\begin{equation}
\phi_k(G^d)=\sum\limits_{n:{ \{ \mathcal{C}_k\cap \mathcal{C}_{n,d} \} \neq \emptyset}}\frac{\alpha_n}{1+|\mathcal{C}_{n,d}|}.
\label{shaply_value_CF}
\end{equation}}
\end{theorem}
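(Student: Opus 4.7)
The plan is to start from the classical Shapley value expression
\begin{equation*}
\phi_k(\mathcal{S}_0,G^d)=\sum_{\mathcal{R}\subseteq\mathcal{S}_0\backslash\{k\}}\frac{|\mathcal{R}|!\,(|\mathcal{S}_0|-|\mathcal{R}|-1)!}{|\mathcal{S}_0|!}\bigl(v(\mathcal{R}\cup\{k\},G^d)-v(\mathcal{R},G^d)\bigr),
\end{equation*}
and to replace the marginal contribution $v(\mathcal{R}\cup\{k\},G^d)-v(\mathcal{R},G^d)$ by the exclusive-influence expression obtained in Proposition~1, applied with $\mathcal{S}_t:=\mathcal{R}\cup\{k\}$. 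This immediately identifies the increment a UE $k$ brings to a coalition $\mathcal{R}$ with a weighted sum, over $n\in\mathcal{N}\setminus\mathcal{S}_0$, of terms that are non-zero only when $k$ shares at least one one-hop D2D neighbor with some $j\in\mathcal{C}_{n,d}$, i.e., exactly when $\mathcal{C}_k\cap\mathcal{C}_{n,d}\neq\emptyset$. This first observation already explains the summation range appearing in the claimed identity.

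Next I would exploit \emph{linearity} of the Shapley value with respect to the characteristic function. Because $v(\mathcal{S},G^d)=\sum_{n\in\mathcal{N}\setminus\mathcal{S}_0}\alpha_n\,I_{d_n}(\mathcal{S})$ is an additive decomposition over destination UEs $n$, the Shapley value decomposes as $\phi_k(G^d)=\sum_n\alpha_n\,\phi_k^{(n)}$, where $\phi_k^{(n)}$ is the Shapley value of the per-destination subgame with value function $v_n(\mathcal{S})=I_{d_n}(\mathcal{S})$. It therefore suffices to prove that for each $n$ with $\mathcal{C}_k\cap\mathcal{C}_{n,d}\neq\emptyset$ one has $\phi_k^{(n)}=1/(1+|\mathcal{C}_{n,d}|)$, and that $\phi_k^{(n)}=0$ otherwise.

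For the per-destination subgame I would invoke a \emph{symmetry} argument. In the subgame associated to $n$, the only players whose addition can ever change the value are the UEs that are one-hop neighbors of some member of $\mathcal{C}_{n,d}$; together with $n$ itself (which is a natural ``receiver'' included in the count), these players form a pool of exactly $1+|\mathcal{C}_{n,d}|$ candidates. Because each such UE $k$ plays a structurally indistinguishable role in producing the probability that cached content reaches $\mathcal{C}_{n,d}$, the Shapley axiom of symmetry (and the efficiency property $\sum_k\phi_k^{(n)}=1$ in the limiting grand-coalition reading used implicitly in the theorem) forces the equal split $1/(1+|\mathcal{C}_{n,d}|)$. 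Summing $\alpha_n\,\phi_k^{(n)}$ over the admissible $n$ then yields the claimed formula.

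The hard part, in my view, is precisely this symmetry/equal-split step. Strictly speaking, the exclusive-influence increments depend on the influence weights $I_{mn}$ and on the coalition $\mathcal{R}$ through the product $\prod_{m\in\mathcal{C}_j\cap\mathcal{R}}(1-I_{mn})$, so symmetry is not manifest at the level of individual marginal contributions. One must argue that, once the combinatorial weights $\tfrac{|\mathcal{R}|!(|\mathcal{S}_0|-|\mathcal{R}|-1)!}{|\mathcal{S}_0|!}$ are summed over $\mathcal{R}\subseteq\mathcal{S}_0\setminus\{k\}$ using the standard Shapley identity $\sum_{r=0}^{p-1}\binom{p-1}{r}\tfrac{r!(p-r-1)!}{p!}=\tfrac{1}{p}$ with $p=1+|\mathcal{C}_{n,d}|$, all dependence on the specific $I_{mn}$ values collapses and only the counting factor $1/(1+|\mathcal{C}_{n,d}|)$ survives. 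Carrying out this telescoping cleanly—by grouping coalitions of the same size and showing that the product terms average out by symmetry over the pool of eligible contributors—is the main technical obstacle and the place where I would spend most of the proof's effort.
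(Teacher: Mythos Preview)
Your opening moves—writing $\phi_k$ via the classical sum over $\mathcal{R}$, replacing the marginal contribution by the exclusive-influence expression from Proposition~1, and using linearity of the Shapley value to reduce to per-destination subgames $v_n(\mathcal{S})=I_{d_n}(\mathcal{S})$—coincide with how the paper begins. The divergence is in how the factor $1/(1+|\mathcal{C}_{n,d}|)$ is extracted, and there your argument has a genuine gap.

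The symmetry/efficiency route does not go through as stated. Players in the per-destination subgame are \emph{not} symmetric: different UEs carry different one-hop influence weights $I_{mn}$, so the Shapley symmetry axiom cannot be invoked, and there is no a~priori reason the products $\prod_m(1-I_{mn})$ should ``average out'' under the Shapley combinatorial weights (you flag this yourself, but offer no mechanism for the collapse). Your identification of the relevant pool with exactly $1+|\mathcal{C}_{n,d}|$ players is also unjustified: the set of UEs $k$ with $\mathcal{C}_k\cap\mathcal{C}_{n,d}\neq\emptyset$ need not have cardinality $|\mathcal{C}_{n,d}|$, and $n$ already lies in $\mathcal{C}_{n,d}$ so ``adding $n$'' does not account for the~$+1$. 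Finally, efficiency in the form $\sum_k\phi_k^{(n)}=1$ would require $I_{d_n}(\mathcal{N})=1$, which is false in general.

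The paper avoids all of this. After Proposition~1 it switches to the random-permutation interpretation of the Shapley value and argues that UE $k$ contributes exclusively to the term for $n$ only under the \emph{necessary condition} that none of its predecessors $\mathcal{R}$ in a uniformly random ordering already has a one-hop neighbor in $\mathcal{C}_{n,d}$, i.e., $\bigl(\bigcup_{j\in\mathcal{R}}\mathcal{C}_j\bigr)\cap\mathcal{C}_{n,d}=\emptyset$. It then cites the closed-form result of Aadithya et~al.\ that this event has probability $\tfrac{1}{1+|\mathcal{C}_{n,d}|}$ under the uniform permutation measure, and pairs it with the requirement $\mathcal{C}_k\cap\mathcal{C}_{n,d}\neq\emptyset$ (otherwise $k$ cannot reach $\mathcal{C}_{n,d}$ at all). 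Thus the factor $1/(1+|\mathcal{C}_{n,d}|)$ enters as a ``first-to-cover'' probability in a random ordering, not via symmetry or efficiency; to close your proof you should replace the symmetry step with this permutation-probability computation.
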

\begin{proof}
Following (\ref{per_influence}) in Proposition 1, we can write $\phi_k(\mathcal{S}_0,G^d)=\sum\limits_{\mathcal{R}\subseteq \mathcal{S}_0\backslash\{k\}}\frac{(|\mathcal{S}_0|-|\mathcal{R}|-1)!|\mathcal{R}|!}{|\mathcal{S}_0|!}\sum\limits_{n\in \mathcal{N}\backslash \mathcal{R}}I_{d}(n,k)$. Given a coalition $\mathcal{R}$ and a UE $k \notin \mathcal{R}$, the necessary condition under which UE $k$ exclusively affects another UE $n$ to have an empty intersection set between the one-hop neighbor sets of all UEs in coalition $\mathcal{R}$ and the $d$-distance neighbor set of UE $n$. This implies that $\{\cup_{j\in \mathcal{R}} \mathcal{C}_j\}\cap \mathcal{C}_{n,d} = \emptyset$. Given that the permutations are chosen uniformly for computing the SV, it has been shown in~\cite{aadithya2010efficient} that this necessary condition, $\{\cup_{j\in \mathcal{R}} \mathcal{C}_j\}\cap \mathcal{C}_{n,d} = \emptyset$, is satisfied with probability $\frac{1}{1+|\mathcal{C}_{n,d}|}$. Thus, $\text{Pr}(\{\cup_{j\in \mathcal{R}} \mathcal{C}_j\}\cap \mathcal{C}_{n,d} = \emptyset)=\frac{1}{1+|\mathcal{C}_{n,d}|}$. Moreover,  if UE $k$ wants to send its cached content to the set of $d$-distance neighbors of UE $n$, at least one of the members of the $d$-distance neighbor set of UE $n$ must be in the set of one-hop neighbors of UE $k$. This means that $\mathcal{C}_k\cap \mathcal{C}_{n,d}\neq \emptyset$. Thus, the Shapley value of UE $k$ in the social weighted D2D graph $G^d$ is given by (\ref{shaply_value_CF}).
\end{proof}
The relationship in (\ref{shaply_value_CF}) shows that the SV of a given UE will be affected by two key factors: a) the number of its one-hop neighbors and b) the UEs in the $d$-distance of its one-hop neighbors. In other words, the Shapley value of UE $k$ in a coalition will be higher if UE $k$ has many one-hop neighbors and the distance of these one-hop neighbors from other members in the coalition is less than $d$. Consequently, if we select the seed set according to the Shapley value of UEs for the games defined in (\ref{value function}) over the graph, each seed can send its cached content to those UEs that are not in the $d$-distance of the one-hop neighbor set of other seeds.
%The centrality literature such as~\cite{aadithya2010efficient} and~\cite{bonacich1987power} also recognize this fact.

If we just model the interactions of UEs on the simple D2D graph with the proposed game (\ref{value function}), we will not capture the effect of two key parameters on the D2D sharing of the cached content: a) the social tie between members of each community and b) the effect of the communities on one another. One natural way to capture these interactions among different members of the social communities that interact over the D2D graph is using a hypergraph model~\cite{roy2015measuring} and~\cite{liu2010modeling}.
\vspace{-0.35cm}
\subsection{Social Communities as Hypergraphs}
If the problem of caching and offloading content in the multi-community multi-hop D2D networks is modeled by a simple graph, the effect of the communities in offloading social content on each other will not be fully characterized. Thus, we must consider the graph representation of different layers: the physical D2D layer and the different layers of the social graphs. Although the interplay among social communities pertaining to a D2D graph is very challenging, we use a hypergraph framework that is a useful mathematical tool to analysis complicated relationships among multiple entities ~\cite{liu2010modeling,Bai2016} and~\cite{roy2015measuring}. The hypergraph model allows capturing, not only the effect of the social ties between members in each community but also the effect of the interaction between different communities on offloading the cached content. Hence, we model the set of individuals belonging to one community using the hyperedge of a hypergraph, and, then, we can apply hypergraphs to model the multi-hop D2D network while taking into account the presence of multiple communities.

\begin{definition}\label{Hypergraph_def}
\textnormal{Let $\mathcal{N}$ be a finite set of the UEs, a \emph{hypergraph} $H=(\mathcal{L}_1,\mathcal{L}_2,...,\mathcal{L}_L)$ is a family of subsets of $\mathcal{N}$ such that $\mathcal{L}_l \neq \emptyset$ and $\cup_{l=1}^{L}\mathcal{L}_l=\mathcal{N}$. The UEs of $\mathcal{N}$ are the vertices of the hypergraph, and the sets of communities $\mathcal{L}_1,\mathcal{L}_2,...,\mathcal{L}_L$ are called hyperedges.}
\end{definition}

Hence, a hypergraph is a generalized graph in which edges can consist of any subset of the vertices while an edge can exactly consist two vertices in the traditional graph~\cite{Bai2016} and~\cite{liu2010modeling}. One way of analyzing a hypergraph is to model it using a line graph and then analyzing the modeled line graph~\cite{liu2010modeling}. The line graph of a hypergraph is a weighted graph whose vertices are the hyperlinks of the hypergraph and the weight on each edge is related to the interaction between two hyperedges of the hypergraph. Following the properties of the multi-community multi-hop D2D network, we define two weighted graphs from the hypergraph model.
\begin{definition}
\textnormal{A \textit{directed influence-weighted graph} of a hypergraph $H$ is defined as a directed weighted graph $D_i(H)=(\{1,...,L\},\mathcal{E}_H,w_i)$, in which each node of $D_i(H)$ represents one of the communities in $\mathfrak{L}$, $\mathcal{E}_H=\{(u,v)| \mathcal{L}_u,\mathcal{L}_v \in \mathfrak{L}, \exists\text{ }m\text{ and }m' \in \mathcal{L}_u : \mathcal{P}^d_{mm'} \cap \mathcal{L}_u \neq {\O} \}$ and $w_i:\mathcal{E}_H\rightarrow R$.}
\end{definition}

$\mathcal{E}_H$ captures the fact that, if the shortest path between two UEs in community $\mathcal{L}_u$ passes through community $\mathcal{L}_v$ on the D2D graph, then community $\mathcal{L}_u$ will affect community $\mathcal{L}_v$. The reason is that the social content of $\mathcal{L}_u$  passes through some UEs in community $\mathcal{L}_v$ and, thus, some UEs in $\mathcal{L}_v$ receive the social content of $\mathcal{L}_u$ over D2D links. The weight of an edge $\{u,v\}\in D_i(H)$ where $\{u,v\}\in \mathcal{E}_H$, is given by:
\begin{equation}
w_i(\{u,v\})=\sum_{\substack{\forall m,m'\in \mathcal{L}_u \\ \mathcal{P}^d_{mm'} \cap \mathcal{L}_v \neq {\O}}} {w_{mm'}}\times |\mathcal{P}^d_{mm'}\cap \mathcal{L}_v|,
\label{weight of Di_community}
\end{equation}
where $\{(m,m')|\forall m,m'\in \mathcal{L}_u, \mathcal{P}^d_{mm'}\cap \mathcal{L}_v \neq {\O}\}$ is the set of pairs in community $\mathcal{L}_u$ that the shortest path between them passes through community $\mathcal{L}_v$, $w_{mm'}$ is the social tie between UE $m$ and $m'$ belonging to community $\mathcal{L}_u$, and $|\mathcal{P}^d_{mm'}|$ is the number of UEs along the shorest path between UE $m$ and $m'$. Thus, the directed community-weighted graph $D_i(H)$ captures the social tie of the end pair of each path and also the path length that one community provides for other community on the D2D graph.
\begin{definition}
\textnormal{A \textit{directed connectivity-weighted graph} of hypergraph $H$ is defined as a directed weighted graph $D_c(H)=(\{1,...,L\},\mathcal{E}_H,w_c)$, where vertices refer to the communities in $\mathfrak{L}$,  $\mathcal{E}_H=\{(u,v)| \mathcal{L}_u,\mathcal{L}_v \in \mathfrak{L}, \exists\text{ }m\text{ and }m' \in \mathcal{L}_u : \mathcal{P}^d_{mm'} \cap \mathcal{L}_u \neq {\O} \}$, and $w_c(\{u,v\})=|\{(m,m')|\forall m,m'\in \mathcal{L}_u, \mathcal{P}^d_{mm'}\cap \mathcal{L}_v \neq {\O}\}|$.}
\end{definition}
The weight on the link from community $\mathcal{L}_u$ to the community $\mathcal{L}_v$ is equal to the number of pairs in community $\mathcal{L}_u$ that the shortest path between them passes through $\mathcal{L}_v$. Thus, the directed connectivity-weighted graph captures the effects of two communities on each others when the UEs of these two communities provide shortest path for each other in the D2D graph.
\vspace{-0.35cm}
\subsection{Proposed Approach for Content Placement}
To solve the influence optimization problem in (\ref{opt_problem}) within a multi-community, multi-hop D2D network, we must consider two parameters: a) the effect of each member on other members in each community and b) the effect of one community on other communities. If we use the game model presented in (\ref{value function}) for the social graph of each community, then the SV of each player will capture only the exclusive influence of each user on other members of its community. If we use the game in (\ref{value function}) for the directed community-weighted and connectivity-weighted line graphs of the hypergraph $H$, then, the Shapley value of each player will capture the exclusive influence of one community on other communities. To capture the influence of each UE on its community's members and the members of other communities using a single metric, we define an offloading power metric for each UE. A larger offloading power means that the UE has a higher capability for offloading social content in multi-community multi-hop D2D graph. The offloading power of a given UE $k$ that belongs to a community $\mathcal{L}_j$ on the line graph $D$ of the hypergraph $H$ is given by:
\begin{equation}
O_k=\frac{\phi_k(G_j^s)}{\sum_{m\in \mathcal{L}_j} \phi_m(G_j^s)}\phi_j(D),
\label{shaply_value_CF_H}
\end{equation}
where $\phi_k(G_j^s)$ is the Shapley value of UE $k$ on its social graph $G_j^s$ as given by (\ref{shaply_value_CF}), and $\phi_k(D)$ is the Shapley value of community $\mathcal{L}_j$ over the directed weighted line graphs $D_i(H)$ or $D_c(H)$ of hypergraph $H$. $\phi_k(D)$ is given by (\ref{shaply_value_CF}). After defining $O_k$, we compute the offloading power of each UE in D2D graph using equation (\ref{shaply_value_CF_H}). In each community, the UE that has the highest offloading power among the members of its community, is selected as the seed from that community. Thus, the seed set includes the $L$ UEs that have the highest offloading powers among the members of its community.
%*****************************************************************
\vspace{-0.4cm}
{\section{Complexity Analysis}
\label{Dec:Complexity}
The complexity of proposed approach stems from the computation of the Shapley value. A direct application of the original Shapley value formula involves considering $O(2^{|\mathcal{N}|})$ coalitions~\cite{young2014handbook}. Such an exponential complexity in the number of users can be prohibitive for bigger networks. However, based on see Theorem 1, the complexity of calculating the exact formula for the Shapley value in (\ref{shaply_value_CF}) is restricted to the user degree and the shortest path between two users in influence graph. Since, the complexity of calculating the user degree is $O(|\mathcal{N}|)$ and the complexity of calculating the shortest path between two users, is $O(|\mathcal{E}_d|+|\mathcal{N}|\log |\mathcal{N}|)$~\cite{aadithya2010efficient}. Consequently the complexity of the proposed approach is $O(|\mathcal{N}||\mathcal{E}_d|+|\mathcal{N}|^2\log |\mathcal{N}|)$, which is reasonable for the type of problems we are dealing with.}

In practice, the worst case situation for computing the calculated exact formula for the Shapley value in (\ref{shaply_value_CF}) is not likely to occur. This is due to the fact that, in practical scenarios, the probability that every user is reachable from all other users that are within a cutoff distance is low. {For example the proposed approach can be used in real scenarios in the context of a D2D local area network (LAN)~\cite{Last}. In D2D LAN scenarios, the users have a cluster-based distribution such as in a campus, coffee shop, mall, or football stadium, and in each cluster, the number of users as well as the diameter of D2D graph are relatively small.} Thus, the complexity of proposed approach in such practical and cluster-based scenarios is acceptable.
\vspace{-0.4cm}
\section{Simulation Results}
\label{Dec:Simulation}

For simulations, we compare the analytical results of the proposed social-aware framework with other conventional centrality approaches. We consider three metrics indicating the seed set: SV metric on the influence graph (SV) (\ref{shaply_value_CF}), offloading power from the hypergraph modeled by the directed influence-weighted graph (SV:influence) (\ref{shaply_value_CF_H}), and  offloading power from hypergraph modeled by the directed connectivity-weighted graph  (SV:connectivity) according to (\ref{shaply_value_CF_H}). The baselines used for comparison are the conventional centrality measures: degree centrality, betweenness centrality, and closeness centrality~\cite{zhang2014recent} and~\cite{aadithya2010efficient}. We consider a BS at the center of a circular area having a radius of $1$~km. We consider that the UEs form spatial clusters in this circular area. The locations of cluster centers are a realization of a Poisson point process and the UEs are randomly distributed around the cluster centers' locations. The UEs are randomly associated to the communities. The strength of the social tie between any two UEs in one community is uniformly selected from $0$ to $1$. We consider $1$ millisecond for each time slot. The bandwidth of each RB is $15$ kHz. We consider $2$ GHz as a carrier frequency. The maximum power of each UE is $10$~mW which can be equally divided among RBs. The noise power spectral density $N_0$ is considered to be $-170$ dBm per Hz. We assume a path loss exponent $2.5$ and a Rayleigh fading with mean $1$ for the channel model of the D2D links. We set the length of each packet to $100$ bits, and the target bit error rate to $10^{-7}$. All statistical results are averaged over a large number of independent runs.

Fig.~\ref{distance} shows the impact of the distance parameter $d$ on the offloading speed of the cached social content. The offloading speed of social content is the average difference between the number of UEs that received the cached data with D2D sharing during two consecutive time slots. From Fig.~\ref{distance}, we can see that the offloading speed increases suddenly when the distance parameter increases, and then it decreases for high distance. Fig.~\ref{distance} shows that the offloading speed reaches a maximum value when the distance parameter is around $30\%$, $40\%$ and $50\%$ of network diameter for SV, SV:connectivity, and SV:influence approaches, respectively. Clearly, the SV:influence approach has the highest offloading speed and the SV scheme achieves the lowest speed. In Fig.~\ref{distance}, for a low value of $d$, the $d$-distance neighbor set of each UE will not be far from this UE. In this case, the probability that the one-hop neighbor set of each UE will have common members with the $d$-distance neighbor set of other UEs decreases. Thus, the size of one-hop neighbor set becomes more effective in increasing the SV following (\ref{shaply_value_CF}). Consequently, UEs having a large number of one-hop neighbors which are in a crowded portion of the graph are selected as seeds for low $d$. However, for a larger $d$, the $d$-distance neighbors of each UE will be located at a relatively far location from this UE. Thus, the one-hop neighbors of one UE can be in $d$-distance of other UEs with more probability. Consequently, the UEs which are in sparse part of the graph are selected as a seed for high $d$. However, when the distance $d$ is around $40\%$ of the network diameter, neither the UE in sparse nor the UEs in crowded area of D2D graph are selected as seeds. In this case, the UEs in sparse and crowded areas receive cached content with lower hops from the selected seeds. Consequently, the average offloading speed is maximized for the $d$ around $40\%$ of network diameter. The offloading speed resulting from the SV:connectivity and SV:influence approaches is higher than SV approach, because they consider the effect of social tie and social community in selecting seeds while the SV just apply social tie information among UEs. Finally, the offloading speed of the SV:connectivity approach is lower than that of the SV:influence since this latter considers not only the number of connection between two social communities on D2D graph but also the social tie of these connections.
\begin{figure}[t]
\centering
\includegraphics[width=2.75 in]{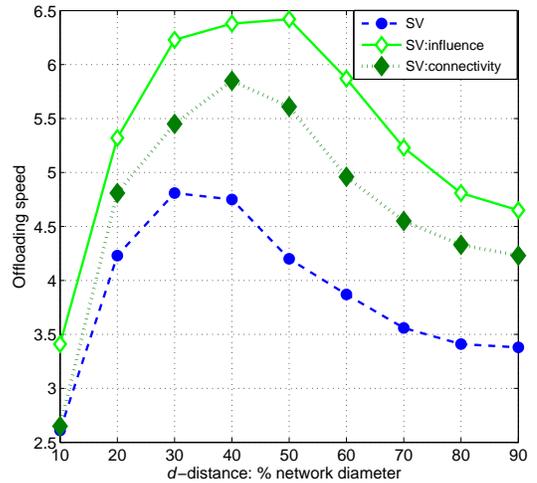}
\caption{\small Average offloading speed vs. distance $d$, when the number
of clusters is 10 and the average number of UEs per cluster is 10.}
\vspace{-0.9cm}
\label{distance}
\end{figure}

Fig.~\ref{hist} shows the distribution of the UEs' SV and offloading power which are normalized by their maximum value. The number of UEs that have the highest normalized value are 2, 6, and 11 for the SV, SV:connectivty and SV:influenced approaches, respectively. In Fig.~\ref{hist}, we can see that the normalized SV of the most UEs is around $0.4$ , while the SV:connectivity and SV:influence approaches assign $0.8$ normalized offloading power to most of the UEs. Moreover, Fig.~\ref{hist} shows that, the number of UEs that have a high normalized offloading power is larger than those having high normalized SV. This means that there are some UEs that are considered effective under SV:connectivity and SV:influence approaches while the SV approach categorized them as ineffective in offloading social content over the D2D graph. This is due to the fact that UEs that have a low SV will have a low connectivity degree and UEs in their one-hop neighbor set are in the $d$-distance neighbor set of other effective UEs. Thus, the number of UEs with high SV in one community is low. These UEs with low SV in one community can have more offloading power, because they may be in the shortest path set between the members of other community. Thus, they can be critical in offloading the content between the members of other communities.
\begin{figure}[t]
\centering
\includegraphics[width=2.75 in]{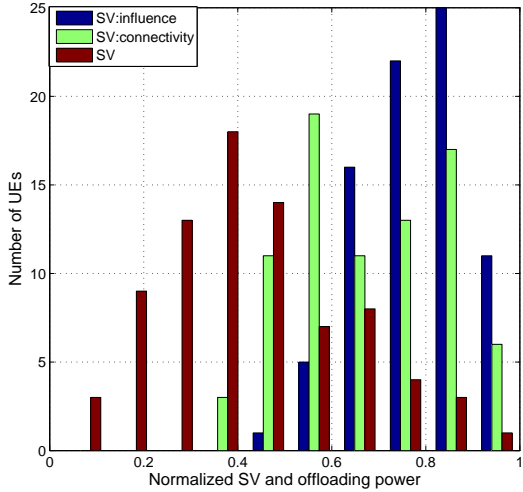}
\caption{\small The histogram of SV and offloading power of UEs when UEs are distributed according clutter point process.}
\vspace{-0.7cm}
\label{hist}
\end{figure}

Fig.~\ref{Naffected} shows the impact of the number of UEs on the number of influenced UEs within the D2D graph when $d=40\%$ of the network diameter. From Fig.~\ref{Naffected}, we can see that the number of influenced UEs increases when the number of UEs increases. Clearly, the number of influenced UEs resulting from the proposed schemes is higher than the number of influenced UEs resulting from conventional centrality approaches. The betweenness centrality behavior is closest one to the Shapley value centrality. The three conventional centrality approaches do not consider the common effect of other seeds on sharing cached content with other UEs. The closeness and degree centralities usually choose the seeds which are close to each other in the crowded parts of the D2D graph, and betweenness centrality usually choose the seeds which are in the most of the shortest path sets between other UEs. Thus, the UEs located in the crowded portions of the D2D graph which are connected to other crowded area are selected as the seeds. The proposed approaches increase the exclusive effect of each seed by decreasing the common members among the $d$-distance neighbors of the one-hop neighbors of the selected seeds. Thus, the seeds resulting from the proposed approaches are distributed across the D2D graph. Since the SV:connectivity and SV:influence approaches consider the the effect of social communities on each others, the seed set selected by these approaches has the most number of influenced UEs. {On the average, the SV:influence achieves $12\%$, $19\%$ and $21\%$ improvement in the number of affected UEs compared to betweenness, degree, and closeness approaches, respectively.}
\begin{figure}[t]
\centering
\includegraphics[width=2.75in]{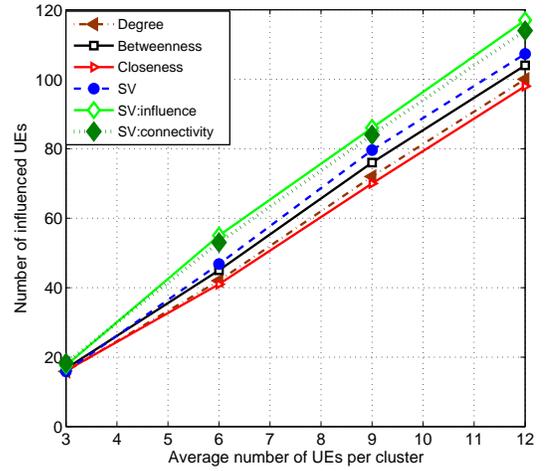}
\caption{\small Number of affected UEs for $d_n=40\%$ of D2D graph diameter when the number
of cluster is 10.}
\vspace{-0.7cm}
\label{Naffected}
\end{figure}

Fig.~\ref{NSpeed} shows the impact of the number of UEs on the average social content offload speed when $d=40\%$ of network diameter. From Fig.~\ref{NSpeed}, we can see that the offload speed increases when the number of UEs increases. From this figure, we can see that degree centrality, our proposed approaches, and  betweenness centrality have the maximum, medium, and minimum average speed of the offloading social content, respectively. The reason is that, in the average degree centrality, the seeds offload social content to many UEs in their one-hop neighbors. The average speed of degree centrality decreases in the last time slots because the total number of affected UEs is low (Fig.~\ref{Naffected}). {Although the speed of offloading of the degree centrality is the highest, the total number of UEs affected in degree centrality is near the lowest one (Fig.~\ref{Naffected}). Hence, in degree centrality, the seeds can quickly share their cached content. However, the number of UEs affected by more than one seed is large because the degree centrality does not consider the exclusive influence of seeds. In contrast, for the proposed approaches, even though the offloading speed is below the degree centrality, the total number of affected UEs is the highest (see Fig.~\ref{Naffected}). The reason is that the number of UEs affected only by one seed is high, since the exclusive influence of each seed is high in our proposed approach. Thus, it takes more time to share the cached content to other UEs. However, the content is now received by a large number of UEs.}
\begin{figure}[t]
\centering
\includegraphics[width=2.75in]{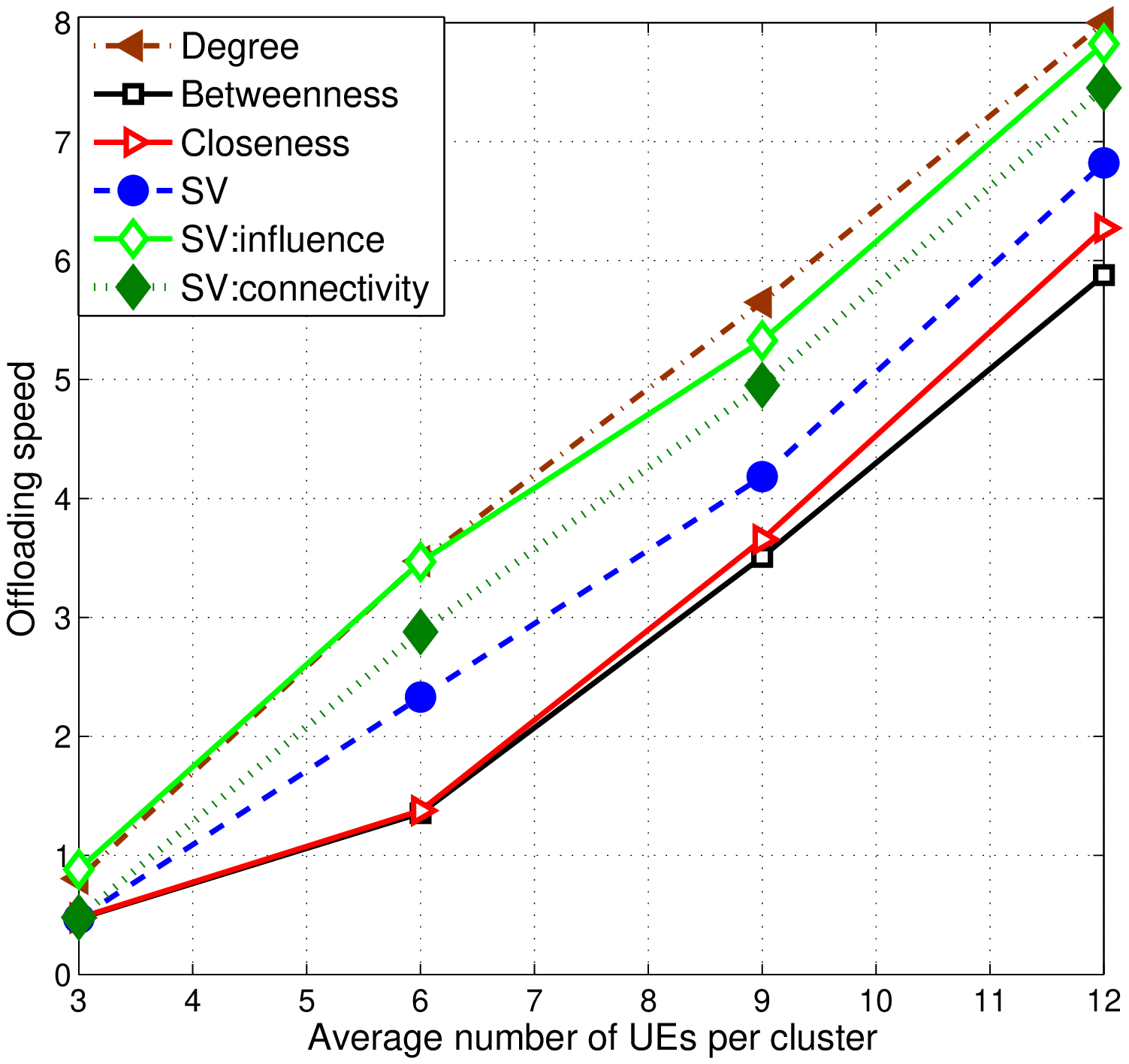}
\caption{\small The offloading Speed of the social cached content for $d_n=40\%$ of D2D graph diameter when the number of cluster is 10.}
\vspace{-0.8cm}
\label{NSpeed}
\end{figure}

For a real-world case, we use Netvizz, a data collection and extraction application that allows exportation of data in standard file formats from different sections of the Facebook social networking service~\cite{rieder2013studying}. Friendship networks, groups, and pages can thus be analyzed quantitatively and qualitatively with regards to demographical, postdemographical, and relational characteristics~\cite{rieder2013studying}. We extract the data from three group pages of students at Virginia Tech on Facebook. These groups are related to sports clubs that gather communities of students who are interested in a common sport. Based on the results from Netvizz, the social network of each community is extracted in which the social tie among two nodes captures the social interest of two members in common data such as a certain sport video file. For the D2D graph capturing the locations of users and their possible D2D links, we distribute the users based on a cluster process over an area of $1000$m$\times1000$m that represents a campus area, then we randomly assign each user to one of the members of three communities. The strength of the social tie between any two UEs in one community is based on the data extracted from the Facebook group pages.

Fig.~\ref{Naffected_real} shows the impact of the number of UEs on the number of influenced UEs within the D2D graph for a real-world case. Fig.~\ref{Naffected_real} shows that the number of influenced UEs increases with respect to the number of UEs. Moreover, the proposed schemes yield a higher number of influenced UEs compared to the conventional centrality measures. The reason is that the proposed approaches increase the exclusive effect of each seed over D2D graph. In addition, the SV:connectivity and SV:influence approaches consider the mutual effect of social communities. {Hence, the seed sets selected by these approaches yield the largest number of influenced UEs. For the real-world case, on the average, the SV:influence achieves $13\%$, $14\%$, and $16\%$ improvement in the number of affected UEs compared to betweenness, degree, and closeness approaches, respectively.}
\begin{figure}[t]
\centering
\includegraphics[width=2.75in]{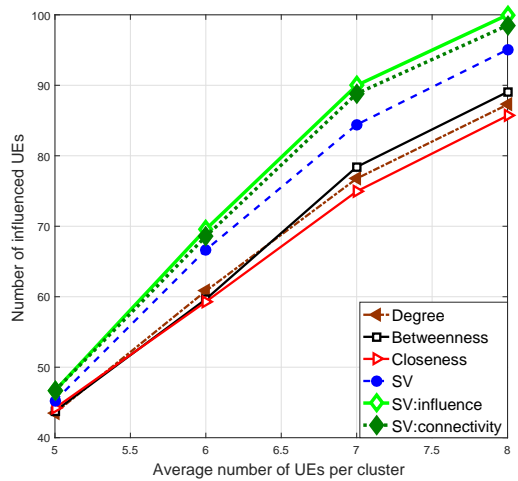}
\caption{\small Number of affected UEs  for a real-world case.}
\vspace{-0.5cm}
\label{Naffected_real}
\end{figure}

Fig.~\ref{NSpeed_real} shows the impact of the number of UEs on the average social content offload speed for a real-world case. Fig.~\ref{NSpeed_real} shows that the offload speed increases with respect to the number of UEs. From this figure, we can see that degree centrality, our proposed approaches, and  betweenness centrality achieve the maximum, medium, and minimum average speed of the offloading social content, respectively. The reason is that, the speed of offloading is high at the initial time slot for the average degree centrality because the seeds offload social content to many UEs in their one-hop neighbors. {Although the speed of offloading of degree centrality is the highest, the total number of UEs affected in degree centrality is nearly the lowest one as shown in Fig.~\ref{Naffected_real}. This means that the seeds can quickly share their cached content, but they share it with common UEs. Although the speed of offloading of the proposed approaches is lower than degree centrality, the proposed approaches outperform all other centrality measures.}
\begin{figure}[t]
\centering
\includegraphics[width=2.75in]{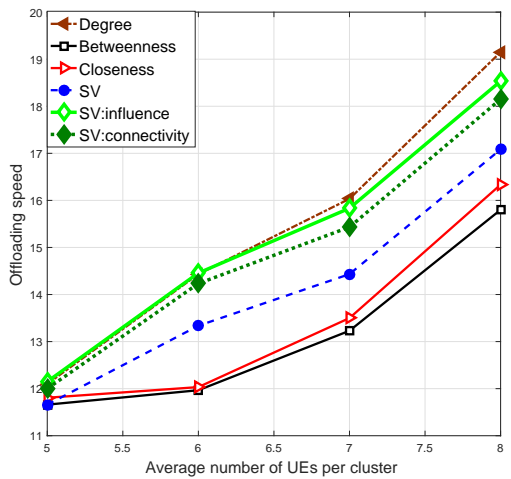}
\caption{\small The offloading Speed of the social cached content for a real-world case.}
\vspace{-0.7cm}
\label{NSpeed_real}
\end{figure}

\begin{figure}[t]
\centering
\includegraphics[width=2.75in]{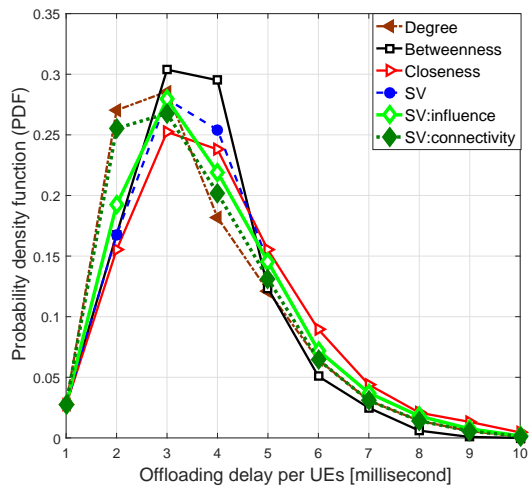}
\caption{\small {The probability density function of offloading delay per UE.}}
\vspace{-0.3cm}
\label{PDF_delay}
\end{figure}
{Fig.~\ref{PDF_delay} shows the probability distribution function (PDF) for the offloading delay per UE. As we can see from Fig.~\ref{PDF_delay}, the average offloading delay per UE for the SV, SV:connectivity, and SV:influence approaches are respectively 3.86, 3.77, and 3.58 milliseconds, and for the degree centrality the stochastic average offloading delay is 3.54 milliseconds. {Thus, on the average, the average offloading delay per UE resulting from the proposed SV-based approaches is only about 0.2 milliseconds higher than degree centrality.}}

{To further illustrate how our approach can work for mobile cases, we simulate a new setup in which the users move. In our simulation, the social graph of each community is stable while the D2D graphs randomly change over time. We move 170 users with different speeds based on a random walk model. We consider a higher correlation between the movement patterns of users that have strong social tie. After each change in the D2D graph due to the users' mobility, we compare two scenarios: using the initially selected seeds for the new, modified D2D graph (A); and selecting new seeds based on the new D2D graph (B). Fig.~\ref{Dif_number} shows the difference between the number of influenced UEs for the two seed selection scenarios A and B as function of the users' speed. As we can see from Fig.~\ref{Dif_number}, the difference between the number of influenced UEs increases with the speed of the UEs. This is due to the fact that, by increasing the speed of the UEs, the dynamic change in the D2D graph increases and the seed selection in scenario B has to recompute the selected seed while scenario A retains the initially selected seeds. On the average, the difference between the number of influenced UEs for these two scenarios A and B is around 4 for SV-based approaches and less than 6 for other approaches, which represents a very small fraction of users. This stems from the correlated mobility patterns of the users~\cite{R2_51, R2_52}.  Thus, for low-speed mobile users, if we do not recompute the SV after every change in D2D network, just 4 out of 170 users (around 2\%) will not be influenced which demonstrates the effectiveness of our approach even for mobile cases.}
\begin{figure}[t]
\centering
\includegraphics[width=2.75in]{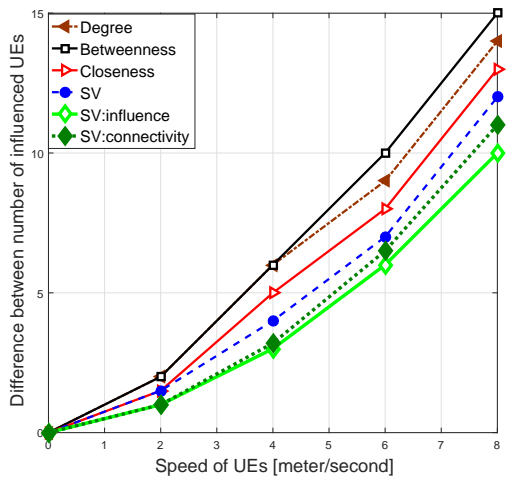}
\caption{\small {Difference between number of influenced UEs.}}
\vspace{-0.6cm}
\label{Dif_number}
\end{figure}
%*****************************************************************
\vspace{-0.4cm}
\section{Conclusion}
\label{Dec:Conclusion}
In this paper, we have proposed a novel context-aware framework for cache placement at wireless UEs in order to improve social content offloading in a D2D-enhanced cellular network. In this network, the users belong to different social communities while UEs form a single multi-hop D2D network. We exploit the multi-community social context of users for improving  the local offloading of cached content by allowing an effective use of multi-hop D2D sharing. Based on the social tie of the users, a cooperative game between UEs is proposed. The value of a coalition is equal to the $d_n$-influence of its members on other UEs over the D2D graph. We have proved that Shapley value of each UE in the proposed cooperative game shows the exclusive effect of UE in content offloading over D2D links. Due to social tie between members of each community and D2D links between UEs, we have modeled the cache placement problem using hypergraph that is analyzed using two line graph models. Using the proposed line graphs coupled with the SV derived from the cooperative game, we have defined an offloading power for each UE in multi-community multi-hop D2D network. Hence, we have considered the UEs with high offloading power which have more exclusive effect on both of its and other community's members as the cache placements. Simulation results have shown that on the average the proposed approach yields significant improvements in terms of the number of UEs that offload popular content, compared to the schemes based on the classical centrality measures.

%*****************************************************************
\bibliographystyle{IEEEtran}
\bibliography{references}
%\vspace{0.2cm}

\begin{IEEEbiography}[{\includegraphics[width=1in,height=1.25in,clip,keepaspectratio]{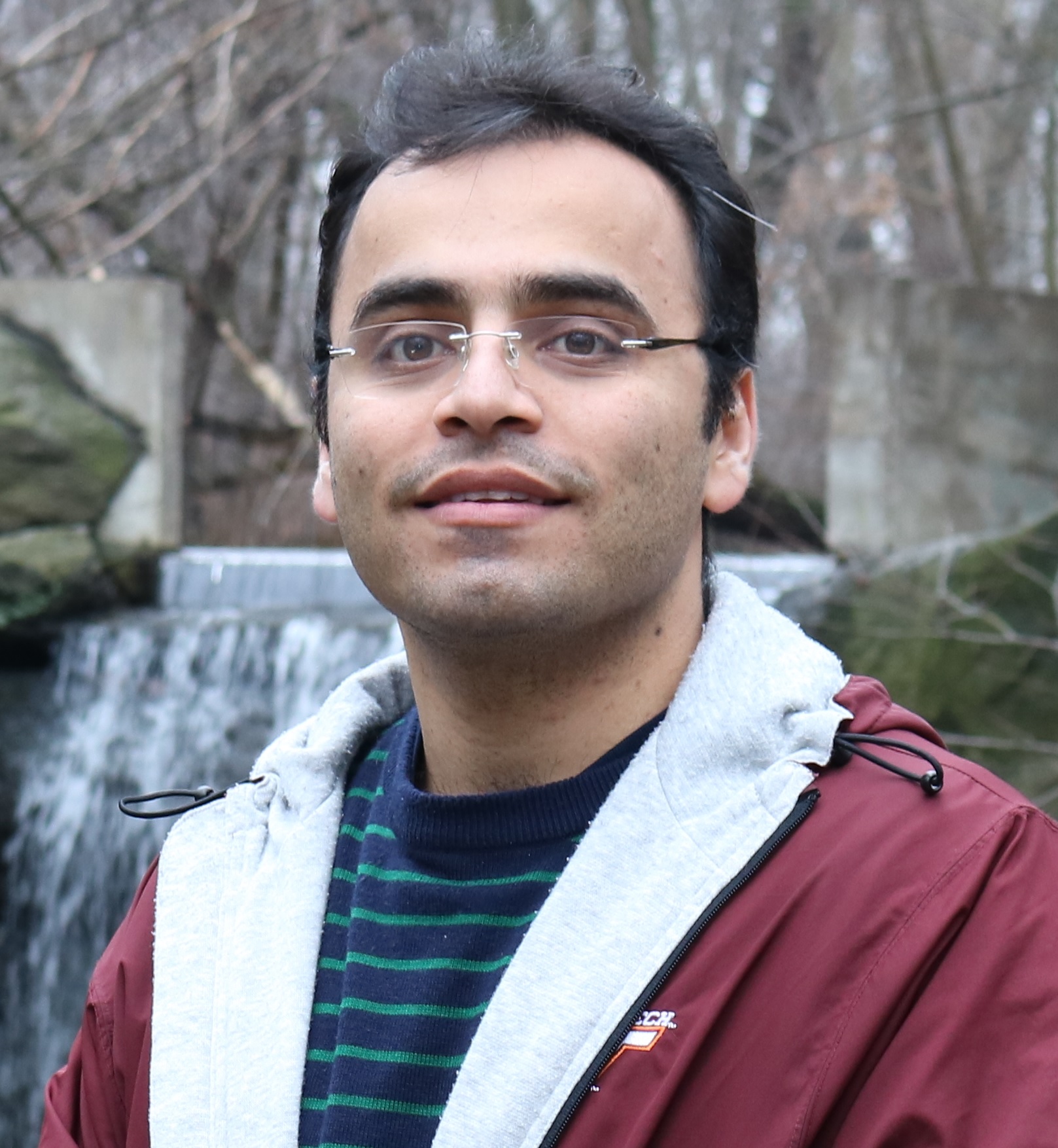}}] {Mehdi~Naderi~Soorki} received the B.Sc degree in Electrical Engineering from Iran University of Science and Technology in 2007. He received his M.Sc and Ph.D degrees in Telecommunication Networks from the Isfahan University of Technology in 2010 and 2018, respectively. He held visiting position at the Wireless@VT group at the Dept. of Electrical and Computer Engineering at Virginia Tech. His research interests include wireless networking, game theory and stochastic optimization.
\end{IEEEbiography}
\vspace{-1.38cm}
\begin{IEEEbiography}[{\includegraphics[width=1in,height=1.25in,clip,keepaspectratio]{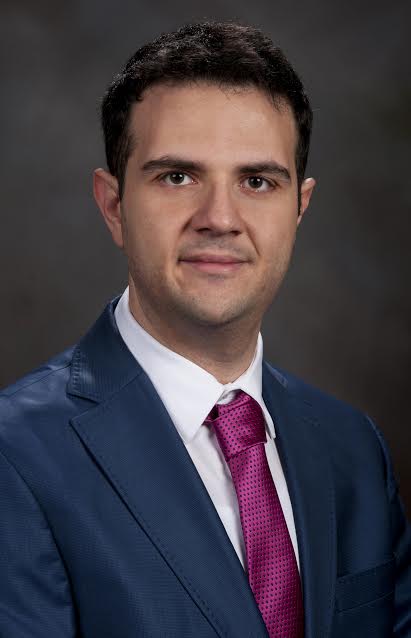}}]
{Walid~Saad}(S'07, M'10, SM’15) received his Ph.D degree from the University of Oslo in 2010. Currently,  he is an Associate Professor at the Department of Electrical and Computer Engineering at Virginia Tech, where he leads the Network Science, Wireless, and Security (NetSciWiS) laboratory, within the Wireless@VT research group. His  research interests include wireless networks, machine learning, game theory, cybersecurity, unmanned aerial vehicles, and cyber-physical systems. Dr. Saad is the recipient of the NSF CAREER award in 2013, the AFOSR summer faculty fellowship in 2014, and the Young Investigator Award from the Office of Naval Research (ONR) in 2015. He was the author/co-author of six conference best paper awards at WiOpt in 2009, ICIMP in 2010, IEEE WCNC in 2012,  IEEE PIMRC in 2015, IEEE SmartGridComm in 2015, and EuCNC in 2017. He is the recipient of the 2015 Fred W. Ellersick Prize from the IEEE Communications Society. From 2015-2017, Dr. Saad was named the Stephen O. Lane Junior Faculty Fellow at Virginia Tech and, in 2017, he was named College of Engineering Faculty Fellow. He currently serves as an editor for the IEEE Transactions on Wireless Communications, IEEE Transactions on Communications, IEEE Transactions on Mobile Computing, and IEEE Transactions on Information Forensics and Security.
\end{IEEEbiography}
\vspace{-1.38cm}
\begin{IEEEbiography}[{\includegraphics[width=1in,height=1.25in,clip,keepaspectratio]{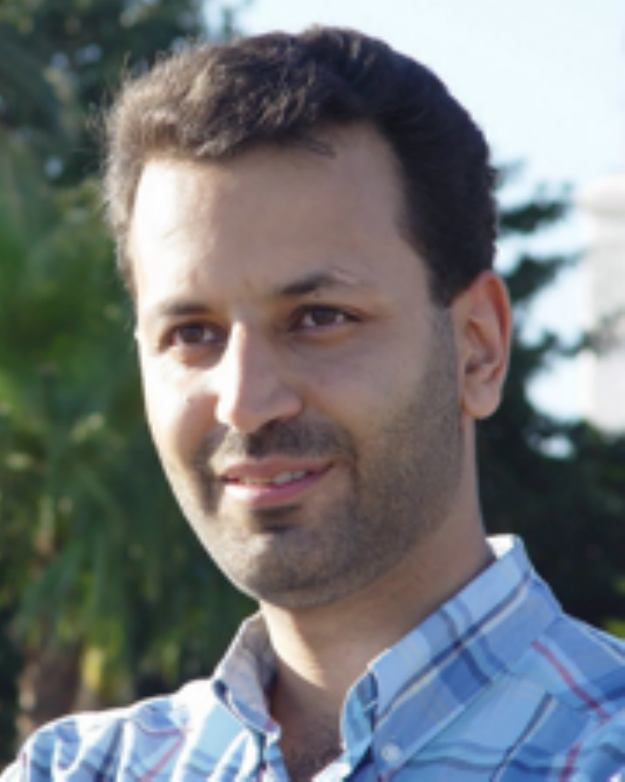}}]
{Mohammad~Hossein~Manshaei} received the BSc degree in electrical engineering and the MSc degree in communication engineering from the Isfahan University of Technology in 1997 and 2000, respectively. He received another MSc degree in computer science and the PhD degree in computer science and distributed systems from the University of Nice Sophia-­Antipolis, France, in 2002 and 2005, respectively. He did his thesis work at INRIA, Sophia-­Antipolis, France. He is currently an Associate Professor at the Isfahan University of Technology, Iran. From 2006 to 2011, he was a senior researcher and lecturer at EPFL, Switzerland. He held visiting positions at the UNCC, the NYU, and the VTech. His research interests include wireless networking, wireless security and privacy, computational biology, and game theory.
\end{IEEEbiography}
\vspace{-1.38cm}
\begin{IEEEbiography}[{\includegraphics[width=1in,height=1.25in,clip,keepaspectratio]{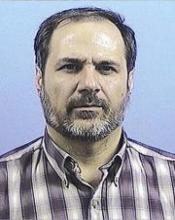}}]
{Hossein~Saidi} received B.S and M.S. degrees in Electrical Eng. in 1986 and 1989 respectively, both from Isfahan University of Technology (IUT), Isfahan Iran. He also received D.Sc. in Electrical Eng. from Washington University in St. Louis, USA in 1994. Since 1995 he has been with the Dept. of Electrical and Computer Engineering at IUT, where he is currently Full Professor and serves as the Chair of of Electrical and Computer Engineering Department. His research interest includes high speed switches and routers, communication networks, QoS in networks, security, queueing system and information theory. He holds 4 USA and one International patents and has published more than 100 scientific papers. He is the recipient of several awards including: 2006 ASPA award (The 1st Asian Science Park Association leaders award) and the Certificate award at 1st National Festival of Information and Communication Technology (ICT 2011) both as the CEO of SarvNet Telecommunication Inc.
\end{IEEEbiography}

%\balance
\end{document}